\DeclareMathOperator*{\argmax}{arg\,max}
\newtheorem{theorem}{Theorem}[section]
\newtheorem{mydef}{Definition}
\newcommand{\fis}[3]{\frac{1+\sum_{#2\neq #1}|h_{#1#2}|^2 #3_{#2}(h)}{|h_{#1#1}|^2}}
\title{Algorithms for Stochastic Games on Interference Channels}
\author{\IEEEauthorblockN{Krishna Chaitanya A, Utpal Mukherji, Vinod Sharma} \\
\IEEEauthorblockA{Department of ECE, Indian Institute of Science, Bangalore-560012 \\ Email: $\lbrace$akc, utpal, vinod$\rbrace$ @ece.iisc.ernet.in}
}
\date{}
\begin{document}
\maketitle
\begin{abstract}
We consider a wireless channel shared by multiple transmitter-receiver pairs. Their transmissions interfere with each other.  Each transmitter-receiver pair aims to maximize its long-term average transmission rate subject to an average power constraint.  This scenario is modeled as a stochastic game.  We provide sufficient conditions for existence and uniqueness of a Nash equilibrium (NE).  We then formulate the problem of finding NE as a variational inequality (VI) problem and present an algorithm to solve the VI using regularization.  We also provide distributed algorithms to compute Pareto optimal solutions for the proposed game.
\end{abstract}
\begin{keywords}
Interference channel, stochastic game, Nash equilibrium, distributed algorithms, variational inequality, Pareto point.
\end{keywords}
\section{Introduction}
\hspace{0.5cm} We consider a wireless channel which is being shared by multiple users to transmit their data to their respective receivers.  The transmissions of different users may cause interference to other receivers.  This is a typical scenario in many wireless networks.  In particular, this can represent inter-cell interference on a particular wireless channel in a cellular network.  The different users want to maximize their transmission rates.  This system can be modeled in the game theoretic framework and has been widely studied \cite{palomar} - \cite{spectrum}.\par
\hspace{0.5cm} In \cite{palomar}, the authors have considered parallel Gaussian interference channels.  This setup is modeled as a strategic form game and existence and uniqueness of a Nash equilibrium (NE) is studied.  The authors provide conditions under which the water-filling function is a contraction and thus obtain conditions for uniqueness of NE and for convergence of iterative water-filling.  They extend these results to a multi-antenna system in \cite{MIMO_IWF} and consider an asynchronous version of iterative water-filling in \cite{async}. \par
\hspace{0.5cm} Parallel Gaussian interference channels (PGIC) were also treated in \cite{stochastic}, \cite{pareto}, \cite{VI}, \cite{luca}, \cite{pang_qos}. In \cite{stochastic}, authors describe an online algorithm to find NE. \cite{VI} proposes a variational inequality approach to choose a NE when there exist multiple NE. A 2-user PGIC is considered in \cite{luca} and proposes a channel selection game and finds its NE. In \cite{pareto} and \cite{pang_qos}, authors consider minimizing power consumption subject to quality of service (QoS) constraints. In \cite{pang_qos}, authors find NE and in \cite{pareto}, when strategy space is limited to finite power levels, Pareto optimal points are proposed as a solution of the game.  In \cite{spectrum}, authors formulate the problem of interference channels as a Stackelberg game and study its equilibrium.  We consider power allocation in a non-game-theoretic framework in \cite{PA} (see also other references in \cite{PA} for such a setup).  In \cite{PA}, we have proposed a centralized algorithm for finding the Pareto points that maximize sum rate.\par
\hspace{0.5cm} All the above cited works consider a one shot non-cooperative game (or a Pareto point).  As against that we consider a stochastic game over Gaussian interference channels, where the users want to maximize their long term average rate and have long term average power constraints (for potential advantages of this over one shot optimization, see \cite{goldsmith}, \cite{vsharma}).  For this system we obtain existence of NE and also develop algorithms to obtain NE via variational inequalities and using regularization.  The convergence of these algorithms is proved under weaker conditions than would be obtained via the methods of \cite{palomar}.  Finally, we provide distributed algorithms to obtain local Pareto points and show their convergence under complete generality.\par
\hspace{0.5cm} The paper is organized as follows. In Section \ref{sys_model}, we present the system model and formulate it as a stochastic game.  In Section \ref{one}, we study this stochastic game and define the basic terminology.  In Section \ref{ne_vi}, we formulate the NE problem as a variational inequality problem and present algorithms to solve the variational inequality.  In Section \ref{pareto}, we discuss the Pareto optimal solutions to the proposed game.  In Section \ref{ne}, we present numerical examples and Section \ref{concl} concludes the paper.
\section{System model and Notation} \label{sys_model}
\hspace{0.5cm} We consider a Gaussian wireless channel being shared by $N$ transmitter-receiver pairs. The time axis is slotted and all users' slots are synchronized.  The channel gains of each transmit-receive pair are constant during a slot and change independently from slot to slot.\par
\hspace{0.5cm} Let $H_{ij}(k)$ be the channel gain from transmitter $j$ to receiver $i$ (for transmitter $i$, receiver $i$ is the intended receiver).  We assume that, $\{H_{ij}(k), k \geq 0 \}$ is an $i.i.d$ sequence with distribution $\pi_{ij}$.  We also assume that these sequences are independent of each other.  The direct channel power gains $\vert H_{ii}(k)\vert^2 \in \mathcal{H}_d = \lbrace h_1,h_2,\dots,h_{n_1} \rbrace$ and the cross channel power gains $\vert H_{ij}(k)\vert^2 \in \mathcal{H}_c = \lbrace g_1,g_2,\dots,g_{n_2}\rbrace$.  We denote $(H_{ij}(k), i,j = 1,\dots,N)$ by ${\bf H}(k)$ and its realization vector by $h(k)$ which takes values in $\mathcal{H}$, the set of all possible channel states.  The distribution of ${\bf H}(k)$ is denoted by $\pi$.  If user $i$ uses power $P_i({\bf H})$ then it gets rate $\text{log} \left(1+\Gamma_i \left(P \left({\bf H} \right) \right) \right)$, where
\begin{equation}
\Gamma_i(P({\bf H})) = \frac{\alpha_i |H_{ii}|^2 P_i({\bf H})}{1 + \sum_{j \neq i}|H_{ij}|^2P_j({\bf H})},
\end{equation}
${\bf H}$ is the channel state vector, $P({\bf H}) = (P_1({\bf H}),\dots,P_N({\bf H}))$ and $\alpha_i$ is a constant that depends on the modulation and coding used by transmitter $i$.  The aim of each user $i$  is to choose a power policy to maximize its long term average rate
\begin{equation*}
r_i({\bf P}_i,{\bf P}_{-i}) \triangleq \limsup\limits_{n \rightarrow \infty} \frac{1}{n} \sum_{k=1}^n \mathbb{E}[\text{log} \left(1+\Gamma_i \left(P \left({\bf H}(k) \right) \right) \right)],
\end{equation*}
subject to average power constraint
\begin{equation*}
\limsup\limits_{n \rightarrow \infty} \frac{1}{n} \sum_{k=1}^n \mathbb{E}[P_i(k)] \leq \overline{P}_i, \text{ for each } i,
\end{equation*}
where ${\bf P}_{-i}$ denotes the power policies of all users except user $i$.\par
\hspace{0.5cm} We address this problem as a stochastic game problem with the set of feasible power policies of user $i$ denoted by $\mathcal{A}_i$ and its utility by $r_i$.  Let $\mathcal{A} = \Pi_{i=1}^{N} \mathcal{A}_i$.\par
\hspace{0.5cm} We limit ourselves to stationary policies, i.e., the power policy for every user in slot $k$ depends only on the channel state $H(k)$ and not on $k$.  In the current setup, it does not entail any loss in optimality.  In fact now we can rewrite this optimization problem to find policy $P({\bf H})$ such that $r_i = \mathbb{E}_{\bf H}[\text{log} \left(1+\Gamma_i \left(P \left({\bf H}\right) \right) \right)]$ is maximized subject to $\mathbb{E}_{\bf H}\left[P_i({\bf H})\right] \leq \overline{P}_i$ for all $i$.  We express power policy of player $i$ by ${\bf P}_i = (P_i(h), h\in \mathcal{H})$, where transmitter $i$ transmits in channel state $h$ with power $P_i(h)$.  We denote the power profile of all players by ${\bf P} = ({\bf P}_1,\dots,{\bf P}_N)$.
\section{Game Theoretic Formulation} \label{one}
\begin{mydef}
A point ${\bf P}^*$ is a Nash Equilibrium (NE) of game $\mathcal{G} = \big((\mathcal{A}_i)_{i=1}^N, (r_i)_{i=1}^N\big)$ if for each player $i$
\begin{equation*}
r_i({\bf P}_i^*,{\bf P}_{-i}^*) \geq r_i({\bf P}_i,{\bf P}_{-i}^*) \text{ for all } {\bf P}_i \in \mathcal{A}_i.
\end{equation*}
\end{mydef}
\hspace{0.5cm} Existence of a pure NE for the strategic game $\mathcal{G}$ follows from the Debreu-Glicksberg-Fan Theorem (\cite{BASAR}, page no. 69), since in our game $r_i({\bf P}_i,{\bf P}_{-i})$ is a continuous function in the profile of strategies ${\bf P} = ({\bf P}_i,{\bf P}_{-i}) \in \mathcal{A}$ and concave in ${\bf P}_i$.
\begin{mydef}
  The best-response of player $i$ is a function $BR_i : \mathcal{A}_{-i} \rightarrow \mathcal{A}_i$ such that $BR_i({\bf P}_{-i})$ is a solution of the optimization problem of maximizing $r_i({\bf P}_i, {\bf P}_{-i})$, subject to ${\bf P}_i \in \mathcal{A}_i$.   
\end{mydef}
\hspace{0.5cm} We see that the Nash equilibrium is a fixed point of the best-response function.  In our game, given the power profile of the other players ${\bf  P}_{-i}$, the best response of player $i$ is
\begin{equation}
BR_i({\bf P}_{-i};h) = \text{max}\left\{0, \lambda_i({\bf P}_{-i}) - \frac{(1+\sum_{j\neq i}\vert h_{ij}\vert^2 P_j(h))}{\vert h_{ii}\vert^2}\right\}, \label{wf}
\end{equation}
where $\lambda_i({\bf P}_{-i})$ is chosen such that the average power constraint is satisfied and $\alpha_i = 1$ for all $i$.  It is easy to observe that the best-response of player $i$ to a given strategy of other players is water-filling on ${\bf f}_i({\bf P}_{-i}) = (f_i({\bf P}_{-i};h),h \in \mathcal{H})$ where 
\begin{equation}
f_i({\bf P}_{-i};h) = \frac{(1+\sum_{j\neq i}\vert h_{ij}\vert^2 P_j(h))}{\vert h_{ii}\vert^2}.
\end{equation}
For this reason, we represent the best-response of player $i$ by ${\bf WF}_i({\bf P}_{-i})$.  The notation used for the overall best-response is same as that used for power profiles, ${\bf WF}({\bf P}) = ({\bf WF}(P(h)), h \in \mathcal{H})$, where ${\bf WF}(P(h)) = (WF_1({\bf P}_{-1};h),\dots,WF_N({\bf P}_{-N};h))$ and $WF_i({\bf P}_{-i};h)$ is as defined in (\ref{wf}).  We use ${\bf WF}_i({\bf P}_{-i}) = (WF_i({\bf P}_{-i};h), h \in \mathcal{H})$.\par 
\hspace{0.5cm} It is observed in \cite{palomar} that the best-response ${\bf WF}_i({\bf P}_{-i})$ is also the solution of the optimization problem
\begin{equation}
\text{ minimize } \left\Vert {\bf P}_i  + {\bf f}_i({\bf P}_{-i})\right\Vert^2, \text{ subject to } {\bf P}_i \in \mathcal{A}_i. \label{opt_proj}
\end{equation}
As a result we can interpret the best-response as projection of $(-f_{i,1}({\bf P}_{-i}),\dots,-f_{i,N}({\bf P}_{-i}))$ on to $\mathcal{A}_i$.  We denote the projection of $x$ on to $\mathcal{A}_i$ by $\Pi_{\mathcal{A}_i}(x)$.  We define the cost function of player $i$, $C_i({\bf P}_i,{\bf P}_{-i}) = \left\Vert {\bf P}_i  + {\bf f}_i({\bf P}_{-i})\right\Vert^2$.  We consider (\ref{opt_proj}), as a game in which every player minimizes its cost function with strategy set of player $i$ being $\mathcal{A}_i$.  We denote this game by $\mathcal{G}^{\prime}$.  This game has the same set of NEs as $\mathcal{G}$ because the best responses of these two games are equal.\par
\hspace{0.5cm} We can rewrite the optimization problem (\ref{opt_proj}) as :
\begin{eqnarray}
&\text{ minimize } &\sum_{h \in \mathcal{H}} \left( P_i(h) + \fis{i}{j}{P} \right)^2, \nonumber \\ 
&\text{ subject to } &{\bf P}_i \in \mathcal{A}_i.\label{proj2}
\end{eqnarray}
We note that this is a convex optimization problem. Necessary and sufficient optimality conditions for a convex optimization problem (\cite{Comp}, page 210) applied to the optimization problem (\ref{proj2}) simplifies to 
\begin{multline}
\sum_{h \in \mathcal{H}} \left( WF_i({\bf P}_{-i};h) + \fis{i}{j}{P} \right) \\
\left( V_i(h) - WF_i({\bf P}_{-i};h) \right) \geq 0,\label{ineq}
\end{multline}
for all ${\bf V}_{i} \in \mathcal{A}_i$.  We can rewrite the $N$ inequalities in (\ref{ineq}) in compact form as 
\begin{equation}
\left({\bf WF}({\bf P}) + \hat{h} + \hat{H}{\bf P} \right)^T \left(x - {\bf WF}({\bf P})\right) \geq 0 \text{ for all } x \in \mathcal{A}, \label{cond_1}
\end{equation}
where $\hat{h}$ is a $N_1$-length block vector with $N_1 = \vert \mathcal{H}\vert$, and each block $\hat{h}(h), h \in \mathcal{H}$, is of length $N$ and is defined by $\hat{h}(h) = \left( \frac{1}{\vert h_{11} \vert^2}, \dots, \frac{1}{\vert h_{NN} \vert^2}\right)$ and  $\hat{H}$ is the block diagonal matrix $\hat{H} = \text{diag}\left\lbrace \hat{H}(h), h \in \mathcal{H} \right\rbrace$ with each block $\hat{H}(h)$ defined by
\begin{equation*}
  [\hat{H}(h)]_{ij} = \begin{cases}
    0 & \text{ if } i=j, \\
    \frac{\vert h_{ij} \vert^2}{\vert h_{ii} \vert^2}, & \text{ else. }
    \end{cases}
\end{equation*}
\hspace{0.5cm} To find a NE, we need to find the fixed points of the waterfilling function for which we use the characterization (\ref{cond_1}).\par
\hspace{0.5cm} A condition for uniqueness of the NE, and for convergence of iterative water-filling for parallel Gaussian interference channels to the NE, was presented in \cite{palomar}.  This condition in the current setup is given by $\rho(S^{max}) < 1$, where the elements of matrix $S^{max}$ are
\begin{equation*}
[S^{max}]_{ij} = \begin{cases}
  0 & \text{ if } i=j, \\
  \text{ max }_{h \in \mathcal{H}} \frac{\vert h_{ij} \vert^2}{\vert h_{ii} \vert^2}, & \text{ else. }
  \end{cases}
\end{equation*}
We study this condition further.
\begin{theorem}
$\rho(S^{max}) < 1$  if and only if 
\begin{equation}
\frac{max\{g_1,\dots,g_{n_2}\}}{min\{h_1,\dots,h_{n_1}\}} < \frac{1}{N-1}.
\label{cond_contr}
\end{equation}
\end{theorem}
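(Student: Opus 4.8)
The plan is to exploit the very special structure that $S^{max}$ inherits from the fact that the channel gains are mutually independent and drawn from the fixed finite sets $\mathcal{H}_d$ and $\mathcal{H}_c$. First I would evaluate the off-diagonal entries explicitly. For $i \neq j$, in the ratio $|h_{ij}|^2/|h_{ii}|^2$ the numerator $|h_{ij}|^2$ is a cross gain ranging over $\{g_1,\dots,g_{n_2}\}$ while the denominator $|h_{ii}|^2$ is a direct gain ranging over $\{h_1,\dots,h_{n_1}\}$. Since these are distinct components of the state vector $h$ and $\mathcal{H}$ is the set of \emph{all} possible channel states, the maximizing direct and cross gains can be chosen independently, so the maximum is attained by taking the largest cross gain and the smallest direct gain. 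Hence every off-diagonal entry equals the same constant
\[
c := \frac{\max\{g_1,\dots,g_{n_2}\}}{\min\{h_1,\dots,h_{n_1}\}},
\]
and therefore $S^{max} = c\,(J-I)$, where $J$ is the $N\times N$ all-ones matrix and $I$ the identity.

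Having reduced $S^{max}$ to a scalar multiple of $J-I$, the next step is to read off its spectrum. The matrix $J$ has eigenvalue $N$ (with the all-ones eigenvector) and eigenvalue $0$ of multiplicity $N-1$, so $J-I$ has eigenvalues $N-1$ (simple) and $-1$ (multiplicity $N-1$). Scaling by $c \ge 0$ gives the eigenvalues of $S^{max}$ as $c(N-1)$ and $-c$. Because $S^{max}$ is symmetric, its eigenvalues are real and its spectral radius is the largest eigenvalue in absolute value; since $N \ge 2$ we have $c(N-1) \ge c$, whence $\rho(S^{max}) = c(N-1)$.

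Finally I would conclude via the chain of equivalences $\rho(S^{max}) < 1 \iff c(N-1) < 1 \iff c < \tfrac{1}{N-1}$, which is precisely the claimed condition (\ref{cond_contr}).

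The only genuinely delicate point is the first step: justifying that the entrywise maxima decouple, so that each off-diagonal entry is exactly $c$. This rests on the product/independence structure of the state space $\mathcal{H}$, which guarantees that the maximizing direct and cross gains may be selected simultaneously. Once this is established, the remainder is the explicit eigenvalue computation for the matrix $c(J-I)$, and no further estimates are needed; in particular, symmetry lets me avoid invoking Perron--Frobenius for a general nonnegative matrix.
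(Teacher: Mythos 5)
Your proof is correct, and it rests on the same structural observation as the paper's --- namely that the maximization over $h \in \mathcal{H}$ decouples, so every off-diagonal entry of $S^{max}$ equals $c = \max\{g_1,\dots,g_{n_2}\}/\min\{h_1,\dots,h_{n_1}\}$ --- but the two arguments finish differently. The paper only extracts from this the weaker fact that all row sums of $S^{max}$ are equal to $(N-1)c$, and then invokes the Frobenius theorem on the spectral radius of a nonnegative matrix (spectral radius is sandwiched between the minimum and maximum row sums, hence equals the common value when they coincide). You instead use the full strength of the observation, writing $S^{max} = c(J-I)$ and computing its spectrum explicitly as $\{c(N-1), -c\}$, which yields $\rho(S^{max}) = c(N-1)$ for $N \geq 2$ without any citation. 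Your route is more elementary and self-contained, and as a bonus delivers the entire spectrum rather than just the spectral radius; the paper's route is more robust in the sense that it would survive if the off-diagonal entries were unequal but the row sums still happened to coincide. Both hinge on the product structure of $\mathcal{H}$ to justify that the maximizing cross gain and direct gain can be selected simultaneously, and you are right to flag that as the one point needing care.
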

\begin{proof}
It can be seen that, all row sums of $S^{max}$ are equal to 
\begin{equation*}
(N-1)\frac{max\{g_1,\dots,g_{n_2}\}}{min\{h_1,\dots,h_{n_1}\}}.
\end{equation*}
Thus, from the Frobenius theorem on spectral radius (\cite{minc}, pp. 24-26), $\rho(S^{max}) < 1$ if and only if inequality (\ref{cond_contr}) holds.
\end{proof}
We need the following result in the next section.
\begin{theorem}
$\rho(\hat{H}) < 1 \text{ if and only if } \rho(S^{max}) < 1.$
\end{theorem}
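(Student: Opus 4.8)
The plan is to prove the stronger statement $\rho(\hat{H}) = \rho(S^{max})$, from which the claimed equivalence is immediate. Two structural observations drive the argument. First, since $\hat{H} = \text{diag}\{\hat{H}(h), h \in \mathcal{H}\}$ is block diagonal, its spectrum is the union of the spectra of its blocks, so $\rho(\hat{H}) = \max_{h \in \mathcal{H}} \rho(\hat{H}(h))$. Second, every matrix involved is entrywise nonnegative, which lets me compare spectral radii through the entries.

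For the direction $\rho(S^{max}) < 1 \Rightarrow \rho(\hat{H}) < 1$, I would first note that for each fixed $h$ and each pair $i \neq j$ we have $[\hat{H}(h)]_{ij} = |h_{ij}|^2/|h_{ii}|^2 \leq \max_{h'\in\mathcal{H}} |h'_{ij}|^2/|h'_{ii}|^2 = [S^{max}]_{ij}$, while the diagonals of both matrices vanish. Hence $\hat{H}(h) \leq S^{max}$ entrywise for every $h$. By the monotonicity of the Perron root under entrywise domination of nonnegative matrices (\cite{minc}), $\rho(\hat{H}(h)) \leq \rho(S^{max})$ for all $h$, and taking the maximum over $h$ yields $\rho(\hat{H}) \leq \rho(S^{max})$.

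For the reverse direction I would exhibit a single channel state that realizes $S^{max}$ as an actual block. Because the direct gains take values in $\mathcal{H}_d$, the cross gains in $\mathcal{H}_c$, and all the coefficients $H_{ij}$ are mutually independent, the state space $\mathcal{H}$ contains the state $h^*$ in which every direct gain equals $\min\{h_1,\dots,h_{n_1}\}$ and every cross gain equals $\max\{g_1,\dots,g_{n_2}\}$. For this state $[\hat{H}(h^*)]_{ij} = \max\{g_1,\dots,g_{n_2}\}/\min\{h_1,\dots,h_{n_1}\} = [S^{max}]_{ij}$ for all $i \neq j$, so $\hat{H}(h^*) = S^{max}$. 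Consequently $\rho(S^{max}) = \rho(\hat{H}(h^*)) \leq \max_{h\in\mathcal{H}} \rho(\hat{H}(h)) = \rho(\hat{H})$. Combining the two inequalities gives $\rho(\hat{H}) = \rho(S^{max})$, which proves the equivalence.

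The one step requiring genuine care, and hence the main obstacle, is the reverse direction: it is not enough that $S^{max}$ dominates every block entrywise, since the entrywise maximum of several nonnegative matrices can have a strictly larger spectral radius than any one of them. What rescues the argument is that all the off-diagonal maxima are attained \emph{simultaneously} at the single worst-case state $h^*$, so $S^{max}$ is not merely an envelope of the blocks but is itself one of the blocks $\hat{H}(h)$. This is precisely where the product structure of $\mathcal{H}$ (equivalently, the independence of the channel coefficients) is essential; without it one would retain only the forward inequality, and the converse could fail.
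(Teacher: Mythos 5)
Your proof is correct and follows essentially the same route as the paper's: both rest on the block-diagonal identity $\rho(\hat{H}) = \max_{h\in\mathcal{H}}\rho(\hat{H}(h))$ together with the key observation that the product structure of $\mathcal{H}$ makes $S^{max}$ itself one of the blocks, namely $\hat{H}(h^*)$ for the worst-case state, which gives the reverse inequality. The only cosmetic difference is that you bound $\rho(\hat{H}(h))$ by $\rho(S^{max})$ via entrywise monotonicity of the Perron root, whereas the paper uses the row-sum bounds of the Frobenius theorem (exploiting that all row sums of $S^{max}$ are equal); both arguments yield the same equality $\rho(\hat{H}) = \rho(S^{max})$.
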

\begin{proof}
Since the matrix $\hat{H}$ is a block diagonal matrix, $\rho(\hat{H}) \leq 1 \text{ if and only if } \rho(\hat{H}(h)) \leq 1 \text{ for } h \in \mathcal{H}$.  It should be noted that $S^{max}$ is also a diagonal block of the block diagonal matrix $\hat{H}$. Maximum row sum of $\hat{H}(h)$ is upper bounded by that of $S^{max}$. Because, $S^{max} = \hat{H}(h)$ for some $h \in \mathcal{H}$, maximum row sum of $\hat{H}$ is the row sum of $S^{max}$.  Using Frobenius theorem, $\rho(\hat{H}) < 1 \text{ iff } \rho(S^{max}) < 1.$
\end{proof}
\hspace{0.5cm} Therefore under (\ref{cond_contr}), we obtain a unique NE for our problem and iterative water-filling converges to the unique NE.  However, (\ref{cond_contr}) is a strong condition.  In the next section we obtain a weaker condition via variational inequalities.
\section{NE using Variational Inequalities} \label{ne_vi}
\hspace{0.5cm} Theory of variational inequalities offers various learning techniques to find NE of a given game.  The equivalence of finding a NE and solving a $VI$ is noted in \cite{Pang}.  A variational inequality problem denoted by $VI(K,F)$ is defined as follows.
\begin{mydef}
Consider a closed and convex set $K \subset \mathbb{R}^n$, and a function $F:K \to K$. The variational inequality problem $VI(K,F)$ is defined as the problem of finding $x \in K$ such that $$F(x)^T(y-x) \geq 0 \text{ for all } y \in K.$$
\end{mydef}
\begin{mydef}
We say that $VI(K,F)$ is  
\begin{itemize}
\item Monotone if $(F(x) -F(y))^T(x-y) \geq 0 \text{ for all } x,y \in K.$
\item Strictly monotone if $(F(x) -F(y))^T(x-y) > 0 \text{ for all } x,y \in K, x \neq y.$
\item Strongly monotone if there exists an $\epsilon >0 $ such that $(F(x) -F(y))^T(x-y) \geq \epsilon \Vert x-y \Vert^2 \text{ for all } x,y \in K$.
\end{itemize}
\end{mydef}
\hspace{0.5cm} We use the projection algorithm (\cite{Pang}, section 12.1) 
\begin{equation}\label{bpa}
x(l+1) = \Pi_K\left( x(l) - \tau F(x(l)) \right), \text{ for } l = 1,2,\dots,
\end{equation}
to solve strongly monotone $VI(K,F)$.  Convergence of the projection algorithm is proved for strongly monotone variational inequality.  For that, first we formulate our problem as a strongly monotone $VI$ when $\tilde{H}$ is positive semidefinite.\par
\hspace{0.5cm} Consider the variational inequality problem $VI(\mathcal{A},F({\bf P}))$ to find ${\bf P}$ such that,
\begin{equation}
\left(F({\bf P}) \right)^T \left(x - {\bf P}\right) \geq 0 \text{ for all } x \in \mathcal{A}, \label{VI_prob}
\end{equation}
where $F({\bf P}) = \hat{h} + \tilde{H}{\bf P} \text{ and } \tilde{H} = I + \hat{H}.$  The solution ${\bf P}^*$ of (\ref{VI_prob}) is a Nash equilibrium of the game $\mathcal{G}$ as it is a Nash equilibrium of $\mathcal{G}^{\prime}$.\par
\hspace{0.5cm} To use (\ref{bpa}), we first convert $VI(\mathcal{A},F({\bf P}))$ to a strongly monotone $VI$.  Define $F_{\epsilon_n}({\bf P}) = \tilde{H}{\bf P} + \hat{h} + \epsilon_n{\bf P},$ for $\epsilon_n > 0$.  We find conditions for $VI(\mathcal{A},F_{\epsilon_n})$ to be strongly monotone.  Then, using (\ref{bpa}), we can find a solution of $VI(\mathcal{A},F_{\epsilon_n})$.  It is shown in \cite{Pang} that as $\epsilon_n \rightarrow 0$, the solution of $VI(\mathcal{A},F_{\epsilon_n})$ converges to that of $VI(\mathcal{A},F)$.
\begin{theorem}
If $\tilde{H}$ is positive semidefinite, $VI(\mathcal{A},F_{\epsilon_n})$ is a strongly monotone $VI$, for $\epsilon_n > 0$.
\end{theorem}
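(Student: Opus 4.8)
The plan is to verify the definition of strong monotonicity directly, exploiting the fact that $F_{\epsilon_n}$ is an affine map. Because $F_{\epsilon_n}({\bf P}) = \tilde{H}{\bf P} + \hat{h} + \epsilon_n {\bf P} = (\tilde{H} + \epsilon_n I){\bf P} + \hat{h}$, the constant term $\hat{h}$ cancels when we subtract, and for arbitrary $x, y \in \mathcal{A}$ we get
\[
F_{\epsilon_n}(x) - F_{\epsilon_n}(y) = (\tilde{H} + \epsilon_n I)(x - y).
\]
This linearization is the whole engine of the argument; everything else is bookkeeping on a quadratic form.

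Next I would form the inner product required by the definition. Writing $z = x - y$,
\[
\bigl(F_{\epsilon_n}(x) - F_{\epsilon_n}(y)\bigr)^T (x - y) = z^T (\tilde{H} + \epsilon_n I) z = z^T \tilde{H} z + \epsilon_n \|z\|^2.
\]
The term $\epsilon_n \|z\|^2$ is precisely what must appear on the right-hand side of the strong-monotonicity inequality, so the argument reduces entirely to showing that the residual term $z^T \tilde{H} z$ is nonnegative.

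The key step, and the only place any care is needed, is that $\tilde{H} = I + \hat{H}$ is \emph{not} symmetric in general, since the off-diagonal entries $|h_{ij}|^2/|h_{ii}|^2$ and $|h_{ji}|^2/|h_{jj}|^2$ typically differ. I would handle this by noting that the quadratic form $z^T \tilde{H} z$ is a scalar, hence equal to its own transpose $z^T \tilde{H}^T z$, so it depends only on the symmetric part $\tfrac{1}{2}(\tilde{H} + \tilde{H}^T)$. Consequently the hypothesis ``$\tilde{H}$ positive semidefinite'' is exactly the statement $z^T \tilde{H} z \geq 0$ for all $z$, and under this assumption the residual term vanishes or is positive but never negative. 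I would flag this reduction explicitly so the PSD hypothesis is applied correctly to a possibly non-symmetric operator.

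Combining the two pieces, I would conclude
\[
\bigl(F_{\epsilon_n}(x) - F_{\epsilon_n}(y)\bigr)^T (x - y) \geq \epsilon_n \|x - y\|^2 \quad \text{for all } x, y \in \mathcal{A},
\]
which is strong monotonicity with modulus $\epsilon = \epsilon_n > 0$. I do not anticipate a genuine obstacle: the statement is essentially the standard fact that a Tikhonov-type perturbation $\epsilon_n I$ turns a merely positive semidefinite (in the quadratic-form sense) affine operator into a strongly monotone one with modulus exactly $\epsilon_n$. The symmetric-part remark is the single subtlety worth recording to keep the proof honest.
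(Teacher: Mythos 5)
Your proof is correct and follows essentially the same route as the paper's: cancel the constant term, expand the quadratic form $({\bf P}-{\bf V})^T\tilde{H}({\bf P}-{\bf V}) + \epsilon_n\Vert {\bf P}-{\bf V}\Vert^2$, and drop the first term using positive semidefiniteness of $\tilde{H}$. Your explicit remark that the quadratic form of the non-symmetric $\tilde{H}$ depends only on its symmetric part is a worthwhile clarification the paper leaves implicit, but it does not change the argument.
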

\begin{proof}
$\left(F_{\epsilon_n}({\bf P}) - F_{\epsilon_n}({\bf V})\right)^T({\bf P} - {\bf V})$
\begin{eqnarray*}
& = & (\tilde{H}{\bf P} + \epsilon_n{\bf P} - \tilde{H}{\bf V} - \epsilon_n{\bf V})^T({\bf P} - {\bf V}) \\
& = & ({\bf P} - {\bf V})^T\tilde{H}^T({\bf P} - {\bf V})+ \epsilon_n({\bf P} - {\bf V})^T({\bf P} - {\bf V}) \\
& \geq & \epsilon_n \Vert ({\bf P} - {\bf V}) \Vert^2.
\end{eqnarray*}
Thus, $VI(\mathcal{A},F_{\epsilon_n})$ is a strongly monotone $VI$.
\end{proof}
\hspace{0.5cm} Thus, we can apply (\ref{bpa}) to solve $VI(\mathcal{A},F_{\epsilon_n})$ for sufficiently small $\epsilon_n > 0$, to get a close approximation of a NE whenever $\tilde{H}$ is positive semidefinite. \par
\hspace{0.5cm} If $\tilde{H}$ is positive definite, $VI(\mathcal{A}, F)$ is a strictly monotone $VI$. A strictly monotone $VI$ admits atmost one solution (\cite{Pang}, page 156).  Since existence of a solution of $VI(\mathcal{A}, F)$ follows from existence of a NE of our game, when $\tilde{H}$ is positive definite this solution is infact unique.
\begin{theorem}
If $\rho(\hat{H}) < 1$ then $\tilde{H}$ is positive definite matrix.
\end{theorem}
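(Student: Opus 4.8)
The plan is to exploit the block-diagonal structure of $\hat{H}$ and reduce the claim to a statement about each diagonal block. Since $\tilde{H} = I + \hat{H} = \mathrm{diag}\{I + \hat{H}(h), h \in \mathcal{H}\}$ is block diagonal, it is positive definite if and only if every block $I + \hat{H}(h)$ is positive definite. Moreover, because the spectral radius of a block-diagonal matrix is the maximum of the spectral radii of its blocks, the hypothesis $\rho(\hat{H}) < 1$ gives $\rho(\hat{H}(h)) < 1$ for each $h \in \mathcal{H}$; this is exactly the block reduction already used in establishing that $\rho(\hat{H}) < 1$ iff $\rho(S^{max}) < 1$. So it suffices to fix one block $A := \hat{H}(h)$, a nonnegative matrix with zero diagonal satisfying $\rho(A) < 1$, and show $I + A$ is positive definite.

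Next I would translate positive definiteness of the (generally non-symmetric) matrix $I + A$ into a statement about its symmetric part. For a real matrix $M$ one has $x^T M x = x^T \tfrac{1}{2}(M + M^T) x$, so $I + A$ is positive definite precisely when its symmetric part $I + B$ is, where $B := \tfrac{1}{2}(A + A^T)$ is symmetric, nonnegative, and has zero diagonal. Writing $x^T (I + A) x = \|x\|^2 + x^T B x \ge \big(1 + \lambda_{\min}(B)\big)\|x\|^2$, the claim reduces to the single inequality $\lambda_{\min}(B) > -1$. Since $B$ is symmetric its eigenvalues are real and $|\lambda_{\min}(B)| \le \rho(B)$, so it would be enough to establish $\rho(B) < 1$.

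The final step, passing from the hypothesis $\rho(A) < 1$ to the bound $\rho(B) < 1$ on the symmetrized block, is where I expect the main obstacle to lie. For a non-symmetric nonnegative matrix there is in general no inequality $\rho\big(\tfrac{1}{2}(A+A^T)\big) \le \rho(A)$, so the spectral-radius hypothesis alone does not obviously control the symmetric part; in particular one cannot simply argue from the eigenvalues of $\tilde{H}$, which are $1 + \mu$ with $\mathrm{Re}(1+\mu) > 0$ whenever $|\mu| < 1$ and hence yield nonsingularity but not definiteness of the quadratic form. To try to close this gap I would bring in the specific coupled structure of the entries $A_{ij} = |h_{ij}|^2/|h_{ii}|^2$, or use the Perron eigenvector of $A$ to build a positive diagonal scaling under which $A$ becomes (sub)stochastic, and then bound the row sums of $B$ via Gershgorin's theorem to force $\rho(B) < 1$. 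If no such extra leverage is available, this is precisely the point at which the hypothesis $\rho(\hat{H}) < 1$ may need to be strengthened toward the stronger condition (\ref{cond_contr}); pinning down which of these is genuinely required is the crux of the argument.
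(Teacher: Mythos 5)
Your reduction is sound as far as it goes --- passing to the blocks $\hat{H}(h)$, symmetrizing, and reducing the claim to $\lambda_{\min}(B) > -1$ for $B = \tfrac{1}{2}(A + A^T)$ is exactly the right way to handle positive definiteness of a non-symmetric matrix, and your observation that $\rho(A) < 1$ alone does not control $\rho(B)$ is correct (e.g.\ $A$ with off-diagonal entries $4$ and $0.1$ has $\rho(A) = \sqrt{0.4}$ but symmetric part of spectral radius $2.05$). But the proposal stops at precisely the step that needs to be proved, so as written it is incomplete. The missing leverage is the paper's own Theorem 3.1 combined with Theorem 3.2: because $\mathcal{H}$ contains every combination of gains, $S^{max} = \hat{H}(h)$ for some state $h$, so $\rho(\hat{H}) < 1$ is equivalent to $\rho(S^{max}) < 1$, which by (\ref{cond_contr}) forces \emph{every} off-diagonal entry of \emph{every} block to satisfy $\vert h_{ij}\vert^2/\vert h_{ii}\vert^2 \leq \max g / \min h < 1/(N-1)$. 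The symmetrized block $B$ then has zero diagonal and off-diagonal entries below $1/(N-1)$, so each Gershgorin row sum is below $1$, giving $\rho(B) < 1$ and hence $\lambda_{\min}(B) > -1$. That closes your gap; note it works only because the hypothesis is global over all channel states, not because of any structure within a single block.

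You should also know that the paper's own proof takes exactly the shortcut you (rightly) refused: it argues that $\rho(\hat{H}) < 1$ puts the eigenvalues of $\hat{H}$ in the unit disk, so the eigenvalues of $\tilde{H} = I + \hat{H}$ have positive real part, ``and hence $\tilde{H}$ is positive definite.'' For a non-symmetric matrix that last inference is invalid under the quadratic-form notion of definiteness --- and the quadratic-form notion is what Section \ref{ne_vi} actually needs, since strict monotonicity of $VI(\mathcal{A},F)$ requires $({\bf P}-{\bf V})^T\tilde{H}({\bf P}-{\bf V}) > 0$. So your more cautious route, completed via Theorem 3.1 and Gershgorin as above, is not merely an alternative: it is the argument required to make the theorem support its intended use, whereas the paper's one-line spectral argument only establishes that $\tilde{H}$ is nonsingular with spectrum in the right half plane.
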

\begin{proof}
If $\rho(\hat{H}) < 1$, then all eigenvalues $\lambda_1,\dots,\lambda_N$ of $\hat{H}$ are in unit circle.  Thus, the eigenvalues $1+\lambda_i, i = 1,\dots,N$ of $\tilde{H} = I + \hat{H}$ have positive real parts and hence $\tilde{H}$ is positive definite.
\end{proof}
\hspace{0.5cm} The other-way implication is not true.  For example, consider 3-user interference channel with $\mathcal{H}_d = \{0.3, 0.6\}$ and $\mathcal{H}_c = \{0.2, 0.1\}$. It can be seen that $\rho(\hat{H}) > 1$ but $\tilde{H}$ is positive definite. Thus we can find the NE using (\ref{bpa}).\par
\hspace{0.5cm}  The condition that $\tilde{H}$ is positive semidefinite is a much weaker condition than $\rho(\hat{H}) < 1$. The former condition requires the eigenvalues of $\tilde{H}$ to lie in the right half plane, the latter requires the eigenvalues to lie in a unit circle with $(1,0)$ as center.
\section{Pareto Optimal Solutions}\label{pareto}
\hspace{0.5cm} In this section, we consider Pareto optimal solutions to the game $\mathcal{G}$.  A power allocation ${\bf P}^{*}$ is Pareto optimal if there does not exist a power allocation ${\bf P}$ such that $r_i({\bf P}_i,{\bf P}_{-i}) \geq r_i({\bf P}_i^{*},{\bf P}_{-i}^{*})$ for all $i = 1,\dots,N$ with atleast one strict inequality.  It is well-known that the solution of a weighted-sum optimization of the utility functions is Pareto optimal, i.e., the solution of the following optimization problem, 
\begin{equation}
\text{ max } \sum_{i=1}^N w_i r_i({\bf P}_i,{\bf P}_{-i}),\text{ such that } {\bf P}_i \in \mathcal{A}_i \text{ for all } i, \label{ws_pareto}
\end{equation}
with $w_i > 0$, is Pareto optimal.  Thus, since $\mathcal{A}$ is compact and $r_i$ are continuous, a Pareto point exists for our problem.  We apply the weighted-sum optimization (\ref{ws_pareto}) to the game $\mathcal{G}$ to find a Pareto-optimal power allocation. \par
\hspace{0.5cm} To solve the non-convex optimization problem in a distributed way, we employ augmented Lagrangian method and solve for the stationary points using the algorithm in \cite{distributed}.  We present the resulting algorithm to find the Pareto power allocation in Algorithm \ref{algo_pareto}.  Define the augmented Lagrangian as 
\vspace{-0.5cm}
\begin{multline*}
\mathcal{L}({\bf P},{\bf \lambda}) =  \sum_{i=1}^N w_i r_i({\bf P}_i,{\bf P}_{-i}) + \sum_{i=1}^N \lambda_i(\overline{P}_i-\sum_{h \in \mathcal{H}}\pi(h) P_i(h))\\ + c \sum_{i}(\overline{P}_i-\sum_{h \in \mathcal{H}}\pi(h) P_i(h))^2.
\end{multline*}
\begin{algorithm}
  \begin{algorithmic}
    \State Initialize $\lambda_i^{(1)},{\bf P}_i^{(0)}$ for all $i = 1,\dots,N$.
    \For {$n = 1 \to \infty $}
    \State $ {\bf P}^{(n)} = $ $\text{Steepest\_Ascent}(\lambda^{(n)},{\bf P}^{(n-1)})$\vspace{0.2cm}
    \If {$|\overline{P}_i-\sum_{h}\pi(h) P_i^{(n)}(h)| < \epsilon \text{ for all } i = 1,\dots,N $}
    \State break
    \Else
    \State $\lambda_i^{(n+1)} = \lambda_i^{(n)} - \alpha (\overline{P}_i-\sum_{h}\pi(h) P_i^{(n)}(h))$    
    \State $n = n+1$
    \EndIf
    \EndFor    
    \Function{$\text{Steepest\_Ascent}$}{$\lambda,{\bf P}$}
    \State Fix $\delta, \epsilon$
    \State Initialize $t = 1, {\bf P}^{(t)} = {\bf P}$.
    \Loop
    \For {$i=1 \to N$}
    \State player $i$ updates his power variables as \\ ${\bf Q}_i = {\bf P}_i^{(t)} + \delta \triangledown_i \mathcal{L}({\bf P}_i^{(t)},{\bf P}_{-i}^{(t)},\lambda)$
    \EndFor
    \State Choose ${\bf P}^{(t+1)}$ as 
    \State $i^* = \argmax_i \mathcal{L}({\bf Q}_i,{\bf P}_{-i}^{(t)},\lambda) - \mathcal{L}({\bf P}_i^{(t)},{\bf P}_{-i}^{(t)},\lambda)$ 
    \State ${\bf P}^{(t+1)} = ({\bf Q}_{i^*},{\bf P}_{-i^*}^{(t)}) $
    \State $t = t+1$.
    \State Till $\Vert \triangledown_i\mathcal{L}({\bf P}_i^{(t)},{\bf P}_{-i}^{(t)},\lambda)\Vert_2  < \epsilon$ for each $i$.
    \EndLoop
    \State return ${\bf P}^{(t)}$
    \EndFunction
  \end{algorithmic}
  \caption{Augmented Lagrangian method to find Pareto optimal Power allocation}
  \label{algo_pareto}
\end{algorithm}
We denote the gradient of $\mathcal{L}({\bf P},{\bf \lambda})$ with respect to power variables of player $i$ by $\triangledown_i \mathcal{L}({\bf P}_i,{\bf P}_{-i},\lambda)$.  In Algorithm \ref{algo_pareto}, the step sizes $\alpha,\delta$ are chosen sufficiently small.  Convergence of the steepest ascent function in Algorithm \ref{algo_pareto} is proved in \cite{distributed}. \par
\hspace{0.5cm} Since this is a nonconvex optimization problem, Algorithm \ref{algo_pareto} converges to a local Pareto point (\cite{MOP}) depending on the initial power allocation.  We can get better local Pareto points by initializing the algorithm from different power allocations and choosing the Pareto point which gives the best sum rate among the ones obtained.  We consider this in our illustrative examples.
\section{Numerical Examples}\label{ne}
\hspace{0.5cm} In this section we compare the sum rate achieved at a Nash equilibrium and a Pareto optimal point obtained by the algorithms provided above.  We choose a 3-user interference channel. For Example 1: $\mathcal{H}_d = \{3, 1.5 \}$ and $\mathcal{H}_c = \{0.1, 0.5\}$ and for Example 2: $\mathcal{H}_d = \{0.3, 1\}$ and $\mathcal{H}_c = \{0.2, 0.1\}$.  Here, we assume that all elements of $\mathcal{H}_d, \mathcal{H}_c$ occur with equal probability, i.e., with probability 0.5.  In Example 1, $\rho(\hat{H}) = 0.6667$, hence water-filling function is a contraction and iterative water-filling converges to the unique NE. In Example 2, $\rho(\hat{H}) = 1.3333,$ but $\tilde{H}$ is a positive definite matrix as each block matrix of the diagonal is positive definite.  Thus it has a unique NE.  In Example 2, iterative water-filling does not converge but we can use the regularization algorithm to find the NE. To find Pareto optimal points, in both examples, we choose weights equal to 1 and we use Algorithm \ref{algo_pareto}.  We initialize Algorithm \ref{algo_pareto} from $10$ different initial power allocations chosen at random.  The best Pareto point among the $10$ Pareto points is chosen and plotted in Figure \ref{plot}.  We compare the sum rates for the NE and the Pareto point in Figure \ref{plot} for Example 1 and in Figure \ref{plot2} for Example 2.  In Figures \ref{plot}, \ref{plot2}, we also compare the sum rate at the Pareto point achieved using the algorithm presented in \cite{PA} which is a centralized algorithm and decodes the strong and very strong interference instead of treating them as noise.  The two Pareto optimal curves in Figures \ref{plot}, \ref{plot2} almost coincide, since in both examples all the channel states have weak interference alone, and this interference is treated as noise.  We notice here that Pareto optimal points are more efficient in terms of sum rate than NE.
\begin{figure}
  \hspace{-0.5cm}
  \includegraphics[height=3.9cm,width=9cm]{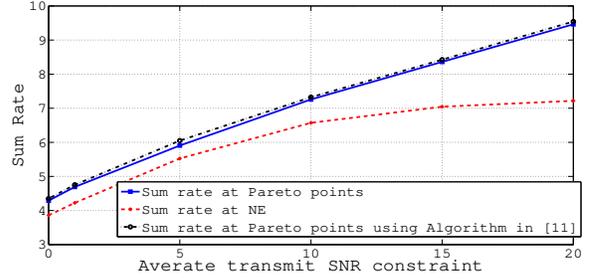}
  \vspace{-0.3cm}
  \caption{Sum rate comparison at Pareto optimal points and Nash equilibrium points for Example 1.}
  \label{plot}    
  \vspace{-0.4cm}
\end{figure}
\begin{figure}
  \hspace{-0.5cm}
  \includegraphics[height=4cm,width=9cm]{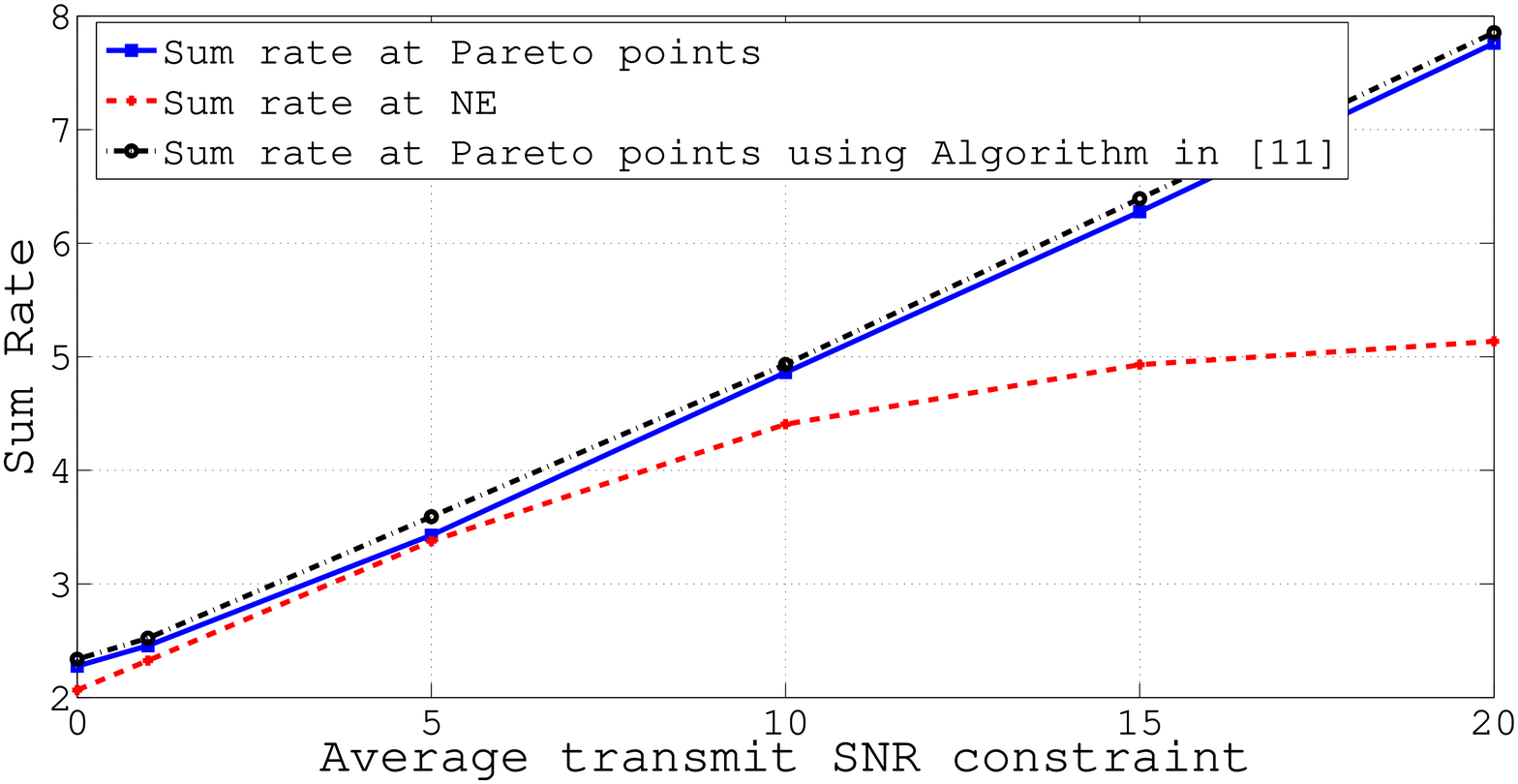}
  \vspace{-0.3cm}
  \caption{Sum rate comparison at Pareto optimal points and Nash equilibrium points for Example 2.}  
  \label{plot2}
  \vspace{-0.5cm}
\end{figure}
\section{Conclusions}\label{concl}
\hspace{0.5cm} We have considered a channel shared by multiple users.  We presented a variational inequality approach using regularization to find the NE of the proposed power allocation game.  The conditions required for convergence of the algorithm based on VI are weaker than those of iterative water-filling.  We have also presented a distributed algorithm to find local Pareto optimal solutions.  This algorithm converges under general conditions and provides more efficient solutions than the NE.

\end{document}